\documentclass[12pt,numbers]{elsarticle}

\usepackage{amssymb}
\usepackage{amsmath}
\usepackage{pslatex}
\usepackage{amsfonts,color,morefloats}
\usepackage{amssymb,amsmath,latexsym,amsthm}
\usepackage{hyperref}
\usepackage{multirow}
\usepackage{makecell}
\usepackage{booktabs}
\usepackage{booktabs}
\usepackage{graphicx}
\hypersetup{hidelinks}
\allowdisplaybreaks

\newcommand{\gf}{{\mathbb{F}}}

\newtheorem{theorem}{Theorem}
\newtheorem{lemma}{Lemma}

\newtheorem{definition}{Definition}
\newtheorem{example}{Example}

\newtheorem{remark}{Remark}

\begin{document}

\begin{frontmatter}

\title{A class of locally differentially $4$-uniform power functions with Niho exponents}

%% use optional labels to link authors explicitly to addresses:
%% \author[label1,label2]{}
%% \affiliation[label1]{organization={},
%%             addressline={},
%%             city={},
%%             postcode={},
%%             state={},
%%             country={}}
%%
%% \affiliation[label2]{organization={},
%%             addressline={},
%%             city={},
%%             postcode={},
%%             state={},
%%             country={}}

%\author[label1]{Ketong Ren} %% Author name
%\ead{rkt@my.swjtu.edu.cn}
%\author[label2]{Maosheng Xiong} %% Author name
%\ead{mamsxiong@ust.hk}
\author[label1]{Haode Yan}
\ead{hdyan@swjtu.edu.cn}

\author[label2]{Kangquan Li\corref{cor1}}
\ead{likangquan11@nudt.edu.cn}
\cortext[cor1]{Corresponding author}

%% Author affiliation
\affiliation[label1]{organization={School of Science},%Department and Organization
            addressline={Harbin Institute of Technology}, 
            city={Shenzhen},
            postcode={518055}, 
            country={China}}
\affiliation[label2]{organization={College of Science},%Department and Organization
	addressline={National University of Defense Technology}, 
	city={Changsha},
	postcode={410072}, 
	country={China}}

%% Author affiliation
%\affiliation[label2]{organization={Department of Mathematics},%Department and Organization
%             addressline={The Hong Kong University of Science and Technology}, 
%             city={Hong Kong},
%             country={China}}

%% Abstract
\begin{abstract}
	Niho exponents have found important applications in sequence design, coding theory, and cryptography. Determining the differential spectrum of a power function with Niho exponent is a topic of considerable interest. In this paper, we  investigate the power function $F(x) = x^{3q - 2}$ over $\mathbb{F}_{q^2}$, where $q = 2^m$ and $m\geq 4$ is an even integer. Notably, the exponent $3q - 2$ is a Niho exponent. By analyzing the properties of certain polynomials over $\mathbb{F}_{q^2}$, we determine the differential spectrum of $F$. Our results show that $F$ is locally differentially $4$-uniform, which complements existing results on the differential spectra of power functions with Niho exponents.
\end{abstract}

%% Keywords
\begin{keyword}
 cryptographic function; power functions; differential uniformity; differential spectrum; Niho exponent 
  
  MSC: 11T06, 94A60
\end{keyword}

\end{frontmatter}

\section{Introduction}
Let $ \mathbb{F}_q $ denote the finite field with $ q $ elements, where $ q = p^n $ is a prime power and $ n $ is a positive integer. Any cryptographic function $ F: \mathbb{F}_q \to \mathbb{F}_q $ can be uniquely represented as a univariate polynomial over $ \mathbb{F}_q $ of degree less than $ q $; hence, we may regard $ F $ as an element of the polynomial ring $ \mathbb{F}_q[x] $. Such functions are widely used in the design of substitution boxes (S-boxes), which are fundamental components in symmetric cryptosystems such as block ciphers. In this context, it is crucial for cryptographic functions to resist differential cryptanalysis, a powerful attack introduced by Biham and Shamir~\cite{BS91}. To measure the resistance of a given function $ F $ to such attacks, Nyberg~\cite{N94} introduced the concept of differential uniformity, which is closely related to the difference distribution table (DDT) of $ F $. These notions are defined as follows. For any $ a, b \in \gf_{q}$, the DDT entry of $F$ at the  point $ (a,b) $, denoted $ \delta_F(a,b) $, is defined as \[\delta_F(a,b)=\big{|} \{x \in \gf_{q}: ~F(x+a)-F(x)=b\} \big{| },\]
where $ \big{|} S \big{|} $ denotes the cardinality of the set $S$.
The differential uniformity of the function $ F $, denoted by $ {\Delta _F} $, is then defined as
\[{\Delta _F}=\max\{\delta_F(a,b): ~a \in {\mathbb{F}_{q}^*}, b \in \mathbb{F}_{q}\}  ,\]
where $\gf_{q}^*=\gf_{q}\setminus\{0\}$. In this case, we say that $ F $ is differentially $\Delta_F$-uniform. For a cryptographic function $ F $, a smaller value of $ \Delta_F $ indicates a stronger resistance to differential attacks. When $ \Delta_F = 1 $, the function $ F $ is said to be perfect nonlinear (PN); and when $ \Delta_F = 2 $, it is called almost perfect nonlinear (APN). We note that perfect nonlinear (PN) functions exist only over finite fields of odd characteristic, whereas over finite fields of even characteristic the lowest possible differential uniformity is 2. PN and APN functions play a significant role in the theoretical foundations of various fields, including coding theory and combinatorics. Recent progress on PN and APN functions has been substantial, and can be found in the literature; see a recent excellent book \cite{carlet2021boolean} by Carlet or some papers, e.g. \cite{BBM08,BCCCV20,BHK20,DMMPW03,D992,D991,D01,golouglu2022biprojective,golouglu2023exponential,HRS99,li2023two,ZW11,ZKW09,ZW09}, and the references therein.

Power functions with low differential uniformity are widely used in the design of S-boxes due to their strong resistance to differential attacks and their typically low implementation cost in hardware. When $ F $ is a power function, i.e., $ F(x) = x^d $ for some integer $ d $, it is clear that
$$
\delta_F(a, b) = \delta_F(1, b / a^d)
$$
for all $ a \in \mathbb{F}_q^* $ and $ b \in \mathbb{F}_q $. Therefore, the differential properties of $ F $ are entirely determined by the values of $ \delta_F(1, b) $ as $ b $ ranges over $ \mathbb{F}_q $. This uniform behavior has attracted significant attention in the cryptographic community and has been extensively studied. The concept of the differential spectrum of a power function was formally introduced by Blondeau, Canteaut, and Charpin in \cite{BCC10}, as follows.
\begin{definition}\label{def1}
	Let $ F(x) = x^d $ be a power function over $ \mathbb{F}_q $ with differential uniformity $ \Delta_F $. Define
	$$
	\omega_i = \big| \{ b \in \mathbb{F}_q : \delta_F(1, b) = i \} \big|, \quad 0 \leq i \leq \Delta_F.
	$$
	The differential spectrum of $ F $ is defined as the multiset
	$$
	DS_F = \{ \omega_i > 0 : 0 \leq i \leq \Delta_F \}.
	$$
\end{definition}
The study of the differential spectrum, and its connections to the Walsh spectrum, dates back to Dobbertin et al.~\cite{DHKM01}. Moreover, it was shown in \cite{BCC10} that the elements of the differential spectrum satisfy the following two fundamental identities:
\begin{equation}\label{omegaiomega}
	\sum_{i=0}^{\Delta_F} \omega_i = q, \quad \text{and} \quad \sum_{i=0}^{\Delta_F} i \omega_i = q.
\end{equation}
The following lemma, which can be derived from Theorem 10 in \cite{HRS99}, plays a crucial role in determining the differential spectrum of a power function $ F(x) = x^d $.
\begin{lemma}[\cite{HRS99}]\label{i2omega}
	Let the notation be as in Definition~\ref{def1}, and let $ N_4 $ denote the number of solutions $ (x_1, x_2, x_3, x_4) \in (\mathbb{F}_q)^4 $ to the system of equations
	\begin{equation}\label{equsystem}
		\left\{
		\begin{aligned}
			x_1 - x_2 + x_3 - x_4 &= 0, \\
			x_1^d - x_2^d + x_3^d - x_4^d &= 0.
		\end{aligned}
		\right.
	\end{equation}Then we have
	\begin{equation}\label{i^2omega}
		\sum_{i=0}^{\Delta_F} i^2 \omega_i = \frac{N_4 - q^2}{q - 1}.
	\end{equation}
\end{lemma}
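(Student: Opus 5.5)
We prove the identity by double counting the solutions of system~\eqref{equsystem}, organizing them by the common value of the two differences. Note that in \eqref{equsystem} the first equation says $x_1-x_2=x_4-x_3$, and the second says $x_1^d-x_2^d=x_4^d-x_3^d$. So we classify the quadruples by the pair $(a,b)=(x_1-x_2,\;x_1^d-x_2^d)$. For fixed $(a,b)\in\mathbb{F}_q^2$, let $D(a,b)$ denote the number of pairs $(y,z)\in\mathbb{F}_q^2$ with $y-z=a$ and $y^d-z^d=b$. Writing $z=x$, $y=x+a$, this is exactly $\delta_F(a,b)$. A quadruple solves \eqref{equsystem} if and only if both $(x_1,x_2)$ and $(x_4,x_3)$ lie in the set counted by $D(a,b)$ for the same $(a,b)$. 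Hence
\[
N_4=\sum_{a\in\mathbb{F}_q}\sum_{b\in\mathbb{F}_q}\delta_F(a,b)^2 .
\]

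Next we split the sum according to whether $a=0$ or $a\neq 0$.

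\textbf{The term $a=0$.} Here $\delta_F(0,b)=q$ if $b=0$ and $0$ otherwise, since $x^d-x^d=0$ for every $x$. This term therefore contributes $q^2$.

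\textbf{The terms $a\neq 0$.} We use the homogeneity noted before Definition~\ref{def1}: $\delta_F(a,b)=\delta_F(1,b/a^d)$. For fixed $a\ne0$, the map $b\mapsto b/a^d$ is a bijection of $\mathbb{F}_q$, so
\[
\sum_b\delta_F(a,b)^2=\sum_b\delta_F(1,b)^2 .
\]
Grouping the values $b$ by $\delta_F(1,b)=i$, this equals $\sum_i i^2\omega_i$. Summing over the $q-1$ nonzero values of $a$ gives $(q-1)\sum_i i^2\omega_i$.

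Combining the two parts,
\[
N_4=q^2+(q-1)\sum_{i=0}^{\Delta_F} i^2\omega_i .
\]
Rearranging gives exactly \eqref{i^2omega}.

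The only point requiring care is the bijection step. We must check that the correspondence between solutions of \eqref{equsystem} and pairs of solutions sharing the same $(a,b)$ is one-to-one, with no overcounting. This holds because $(a,b)$ is determined by $(x_1,x_2)$, and the four coordinates are recovered directly from the two pairs. The argument also needs the definition $\delta_F(a,b)=|\{x: F(x+a)-F(x)=b\}|$, which counts exactly the ordered pairs $(x+a,x)$.
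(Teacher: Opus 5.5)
Your proof is correct: the double count $N_4=\sum_{a,b}\delta_F(a,b)^2$, the contribution $q^2$ from $a=0$, and the homogeneity $\delta_F(a,b)=\delta_F(1,b/a^d)$ for $a\neq 0$ together give exactly the stated identity. The paper does not prove this lemma itself but cites it from Theorem 10 of \cite{HRS99}, and your argument is the standard derivation of that result.
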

Determining the differential spectra of cryptographic functions, especially those that are bijective and have low differential uniformity, is of great importance in symmetric cryptography. However, computing the differential spectrum of a given power function is generally a difficult task. To date, only a limited number of classes of power functions have been fully analyzed in this regard. The known results on power functions $ F $ over $ \mathbb{F}_{2^n} $ for which the differential spectrum has been determined are summarized in Table~\ref{table}. For results on power functions over $ \mathbb{F}_{p^n} $ with $ p $ odd, the reader is referred to the works \cite{CHNC13,DHKM01,JLLQ21,JLLQ22,MXLH22,PLZ23,TY23,TC17,XZLH20,YL21,YMT23,YMT24,YX22,YDZQ25}.

For the binary case ($ p = 2 $), the values $ \delta_F(1, 0) $ and $ \delta_F(1, 1) $ are often exceptionally large and exhibit trivial or degenerate behavior. Consequently, they provide little insight into the cryptographic strength of $F$ and are typically excluded from meaningful differential analysis. This observation has led researchers to focus on the behavior of the differential spectrum for $ \delta_F(1, b) $ with $ b \notin \mathbb{F}_2 $. Motivated by this, the notion of locally APN functions was introduced in \cite{BCC11}, and the concept of locally differentially 4-uniform functions can be naturally generalized. We recall the following definition.
\begin{definition}
	Let $ F $ be a power function on $ \mathbb{F}_{2^n} $. We say that $ F $ is locally APN (respectively, locally differentially 4-uniform) if
	$$
	\delta_F(1, b) \leq 2 \quad (\text{respectively},~ \delta_F(1, b) \leq 4)
	$$
	for all $ b \in \mathbb{F}_{2^n} \setminus \mathbb{F}_2 $.
\end{definition}

Niho exponents were first introduced by Niho in 1972 \cite{N72}, who studied the cross-correlation between an $ m $-sequence and its decimated version. Since then, such exponents have found wide applications in various areas, including cryptography and coding theory. For recent progress on the use of Niho exponents, we refer the reader to \cite{LZ19,A21}.

The differential spectrum of the power function $ f(x) = x^{2q - 1} $ over $ \mathbb{F}_{q^2} $, where $ q = 2^m $ and $ 2q - 1 $ is a Niho exponent, was studied in \cite{BCC11}. It was shown that $ f $ is locally APN. In this paper, we investigate the power function $ F(x) = x^{3q - 2} $ over $ \mathbb{F}_{q^2} $, where $ q = 2^m $ and $ m\geq 4$ is an even integer. Notably, $ 3q - 2 $ is also a Niho exponent. We mention that $F$ is CCZ-inequivalent to the power functions with known differential spectra. The rest of this paper is organized as follows. In Section~\ref{sec:pre}, we recall some preliminary results, introduce the properties of a class of trinomials, and determine the number of solutions of a certain system of equations over finite fields, which will be used in the sequel. The main results are presented in Section~\ref{sec:main}, where we derive the differential spectrum of $ F $ and provide some examples. Finally, Section~\ref{sec:con} concludes the paper.

\begin{table}[!t]\label{table}
	\footnotesize
	\centering
	\caption{Power Functions over $\gf_{2^n}$ with Known Differential Spectra}
	\begin{tabular}{|c|c|c|c|}
		\hline
		$d$ & Condition & $\Delta_F$ & Ref \\
		\hline
		
		$2^t+1$ & gcd$(t,n)=s$ & $2^s$ & \cite{BCC10} \\
		\hline
		
		$2^{2t}-2^t+1$ & gcd$(t,n)=s,\frac{n}{s}\,$ odd & $2^s$ & \cite{BCC10} \\
		\hline
		
		$2^n-2$ & $n\geqslant2$ & $2$\,or\,$4$ & \cite{BCC10} \\
		\hline
		
		$2^{2k}+2^k+1$ & $n=4k$ & $4$ & \cite{BCC10},\cite{XY17} \\
		\hline
		
		$2^t-1$ & $t=3,n-2$ & $6$ or $8$ & \cite{BCC11} \\
		\hline
		
		$2^t-1$ & $t=\frac{n}{2}$, $n$ even & \makecell{$2^\frac{n}{2}-2$\\(locally APN)}  & \cite{BCC11} \\
		\hline
		$2^t-1$ & $t=\frac{n}{2}+1$, $n$ even & \makecell{$2^\frac{n}{2}$\\(locally APN)} & \cite{BCC11} \\
		\hline
		
		$2^t-1$ & $t=\frac{n-1}{2}$, $\frac{n+3}{2}$, $n$ odd & $6$ or $8$ & \cite{BP14} \\
		\hline
		
		$2^m+2^{(m+1)/2}+1$ & $n=2m$, $m\geqslant5$ odd & $8$ & \cite{XYY18} \\
		\hline
		
		$2^{m+1}+3$ & $n=2m$, $m\geqslant5$ odd & $8$ & \cite{XYY18} \\
		\hline
		
		$2^{3k}+2^{2k}+2^k-1$ & $n=4k$ & $2^{2k}$ & \cite{TLWZTJ23} \\
		\hline
		
		$k(2^m-1)$ & $n=2m$, gcd$(k,2^m+1)=1$ & \makecell{$2^m-2$\\(locally APN)} & \cite{HLXZT23} \\
		\hline
		
		$\frac{2^m-1}{2^k+1}+1$ & $n=2m$, gcd$(k,m)=1$ & \makecell{$2^m$\\(locally APN)} & \cite{XML23} \\
		\hline
		$2^m+3$ & $n=2m$, & \makecell{ $2^m$ (locally differentially $4$-uniform) \\  or $2^m+2$}  & \cite{li2023differential} \\  		
		\hline
		$3\cdot 2^{m}-2$ & $n=2m$, $m\geq 4$ even  & \makecell{$2^m$\\(locally differentially $4$-uniform)} & This paper \\
		\hline
	\end{tabular}
	\label{tab:DS}
\end{table}

\section{Preliminaries}\label{sec:pre}
In this section, we present several lemmas that will be used in the following sections.

\subsection{On the multiplicative subgroup $U_{q+1}$ of $\gf_{q^2}^*$}
Let $q=2^m$ for some positive integer $m$. We define $U_{q+1}=\{x\in \gf_{q^2}:x^{q+1}=1\}$, which is a cyclic subgroup of $\gf_{q^2}^*$ of order $q+1$. This multiplicative subgroup plays a central role in our discussion. The following lemma is well-known, and its proof is therefore omitted.

\begin{lemma}\label{xyz}
	Every element $ x \in \mathbb{F}_{q^2}^* $ can be written uniquely in the form
	\[x=yz,\]
	where $ y \in \mathbb{F}_q^* $ and $ z \in U_{q+1} $.
\end{lemma}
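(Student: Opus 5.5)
The plan is to use the group structure of $\gf_{q^2}^*$. This group is cyclic of order $q^2-1=(q-1)(q+1)$. Since $q=2^m$ is even, both $q-1$ and $q+1$ are odd, and $\gcd(q-1,q+1)=\gcd(q-1,2)=1$. In a cyclic group of order $(q-1)(q+1)$ there is exactly one subgroup of each order dividing the group order. The subgroup of order $q-1$ is $\gf_q^*$, and the subgroup of order $q+1$ is $U_{q+1}$. These two subgroups have coprime orders, so their intersection is trivial. The multiplication map
\[
\phi:\gf_q^*\times U_{q+1}\to\gf_{q^2}^*,\qquad (y,z)\mapsto yz,
\]
is a group homomorphism because the group is abelian. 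Its kernel is $\{(y,y^{-1}): y\in \gf_q^*\cap U_{q+1}\}=\{(1,1)\}$, so $\phi$ is injective. Both sides have $(q-1)(q+1)$ elements, so $\phi$ is bijective. This gives existence and uniqueness at once.

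If one wants the decomposition explicitly, take $x\in\gf_{q^2}^*$. Since $\gcd(2,q-1)=1$, the squaring map is an automorphism of $\gf_q^*$. The norm $x^{q+1}$ lies in $\gf_q^*$, so we may set $y=(x^{q+1})^{1/2}=x^{(q+1)/2\cdot\,\cdot}$. Concretely, $y=x^{(q+1)q/2}$ works: $q/2$ inverts squaring on $\gf_q$ because $y^2=x^{(q+1)q}=(x^{q+1})^q=x^{q+1}$, the norm being fixed by Frobenius. Then put $z=x/y$. This gives $z^{q+1}=x^{q+1}/y^{q+1}=x^{q+1}/y^2=1$, where we used $y^{q+1}=y^2$ since $y^q=y$. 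For uniqueness, if $yz=y'z'$ then $y/y'=z'/z\in\gf_q^*\cap U_{q+1}$. An element $t$ of this intersection satisfies $t^{q+1}=t^2=1$, so $t=1$ in characteristic $2$.

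There is no real obstacle here. The only point needing care is checking that $\gf_q^*\cap U_{q+1}=\{1\}$, which uses that $q$ is even.
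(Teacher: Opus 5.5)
Your proof is correct. The paper gives no proof of this lemma (it calls it well known), and your argument is the standard one it has in mind: the subgroups of orders $q-1$ and $q+1$ have coprime orders inside the cyclic group $\mathbb{F}_{q^2}^*$, so they meet trivially, and a counting argument gives the bijection; the explicit choice $y=(x^{q+1})^{q/2}$, $z=x/y$ also works. You correctly identify that uniqueness depends on $q$ being even, since for odd $q$ one has $-1\in\mathbb{F}_q^*\cap U_{q+1}$. The only blemish is the garbled intermediate exponent written as $(q+1)/2$ followed by stray dots; $(q+1)/2$ is not even an integer. This is harmless, because the concrete exponent $(q+1)q/2$ you use next is right.
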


Moreover, the following result was established in \cite{TD21}.
\begin{lemma}\label{usum}
	Let $u_1,u_2,u_3,u_4$ be four distinct elements in $U_{q+1}$. Then 
	\[u_1+u_2+u_3+u_4\neq 0.\]
\end{lemma}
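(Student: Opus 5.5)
We need to prove Lemma \ref{usum}: if $u_1,u_2,u_3,u_4\in U_{q+1}$ are distinct, then $u_1+u_2+u_3+u_4\neq 0$. The plan is to argue by contradiction and use the involution $x\mapsto x^q$, which acts on $U_{q+1}$ as inversion $u\mapsto u^{-1}$. The point is that a vanishing sum of four points of $U_{q+1}$ forces two of the points to coincide.

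\emph{Step 1: reduce to three free points.} Suppose $u_1+u_2+u_3+u_4=0$. Since we are in characteristic $2$, this means $u_1+u_2=u_3+u_4$. Put $s=u_1+u_2$. We have $s\neq 0$ because $u_1\neq u_2$.

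\emph{Step 2: conjugate the relation.} Raise $u_1+u_2=s$ to the $q$-th power. Since $u_i^q=u_i^{-1}$, this gives
\[
s^q=u_1^{-1}+u_2^{-1}=\frac{u_1+u_2}{u_1u_2}=\frac{s}{u_1u_2}.
\]
Hence $u_1u_2=s^{1-q}$. The same computation applied to $u_3+u_4=s$ gives $u_3u_4=s^{1-q}$.

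\emph{Step 3: conclude.} The pairs $\{u_1,u_2\}$ and $\{u_3,u_4\}$ therefore have the same sum $s$ and the same product $s^{1-q}$. So both pairs are the root multisets of the single quadratic $X^2+sX+s^{1-q}$. A quadratic has at most two roots, so $\{u_1,u_2\}=\{u_3,u_4\}$. This contradicts the assumption that the four elements are distinct.

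The only step needing care is the conjugation identity $u^q=u^{-1}$ on $U_{q+1}$, which holds by definition since $u^{q+1}=1$. The remaining steps are routine, so I expect no real obstacle.
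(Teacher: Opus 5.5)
Your argument is correct. The paper does not prove this lemma itself; it only cites Tang and Ding \cite{TD21}. So your proof is an independent, self-contained route rather than a reconstruction of something in the text.

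Your Step~2 derives the identity $u_1u_2=(u_1+u_2)^{1-q}$ for distinct $u_1,u_2\in U_{q+1}$. The paper uses this same identity, without justification, in the proof of Theorem~\ref{deltab}. Your approach therefore ties the lemma to tools the paper already relies on.

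It also proves the lemma directly in the form Theorem~\ref{deltab} needs. Suppose $u_1\neq u_2$ and $u_3\neq u_4$ lie in $U_{q+1}$ with $u_1+u_2=u_3+u_4$. Then $u_3$ is a root of $X^2+sX+s^{1-q}=(X+u_1)(X+u_2)$, so $\{u_1,u_2\}=\{u_3,u_4\}$. This covers in one stroke the overlapping cases that the ``four distinct elements'' wording leaves implicit.

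Nothing in your argument uses the parity of $m$, so it holds for every $q=2^m$.

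One cosmetic simplification: in Step~3 you need not call both pairs root multisets of the quadratic. It is enough to factor the quadratic as $(X+u_1)(X+u_2)$ and check $u_3^2+su_3+s^{1-q}=u_3^2+(u_3+u_4)u_3+u_3u_4=0$.

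In short, the citation buys brevity, while your version gives a two-line verification that readers can check without consulting \cite{TD21}.
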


\subsection{On a class of trinomials over $\gf_{q^2}$}
In this subsection, we study the cardinalities of preimage sets for a specific class of trinomials over $\gf_{q^2}$. Specifically, we consider trinomials of the form
\[
f(x) = x^{2} + x^{1-q} + x^{2-q}.
\]
For the convenience, we define $0^{-1}=0$.

\begin{theorem}\label{image}
	Let $f(x)=x^2+x^{1-q}+x^{2-q}$ be a trinomial over $\gf_{q^2}$, where $q=2^m$ and $m$ is a positive even integer. For each $ c \in \mathbb{F}_{q^2} $, define the preimage set
	$$
	f^{-1}(c) = \{ x \in \mathbb{F}_{q^2} : f(x) = c \}.
	$$
	Then the following hold:
	\begin{itemize}
		\item[(i)]   $ |f^{-1}(0)| = 3 $;
		\item[(ii)]  $ |f^{-1}(1)| = q + 1 $; and
		\item[(iii)] $ |f^{-1}(c)| \leq 2 $ for all $ c \in \mathbb{F}_{q^2} \setminus \mathbb{F}_2 $.
	\end{itemize}
\end{theorem}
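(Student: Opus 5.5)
The plan is to use the polar decomposition of Lemma~\ref{xyz}. Write $x=yz$ with $y\in\gf_q^*$ and $z\in U_{q+1}$. Then $x^q=yz^{-1}$, and the convention $0^{-1}=0$ gives $f(0)=0$. For $x\neq 0$,
\[
f(x)=y^2z^2+z^2+yz^3 = z^2\,(A+yz),\qquad A:=y^2+1\in\gf_q .
\]
Fix $c$. Since $y,A\in\gf_q$, raising $f(x)=c$ to the $q$-th power gives a second equation. This produces a linear system in the unknowns $(A,y)$:
\[
A+zy=cz^{-2},\qquad A+z^{-1}y=c^qz^{2}.
\]
Its determinant is $z+z^{-1}$, which vanishes only for $z=1$. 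So the plan is to split into the cases $z=1$ and $z\neq 1$.

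\emph{Case $z=1$.} The system reduces to $y^2+y+1=c$, which requires $c\in\gf_q$.

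\emph{Case $z\neq 1$.} Put $s=z+z^{-1}\neq0$. Solving the system gives
\[
ys=cz^{-2}+c^qz^2,\qquad As=cz^{-3}+c^qz^3 .
\]
Here $y$ is automatically fixed by Frobenius, so $y\in\gf_q$. The only remaining conditions are $y\neq0$ and the consistency relation $A=y^2+1$. Each admissible $z\neq1$ therefore contributes exactly one preimage. Multiplying the consistency relation by $s^2$ and expanding, everything becomes a square. Setting $t=z^2$ (a bijection of $U_{q+1}$ that fixes $1$), the condition becomes
\[
P_c(t):=(c^{2q}+c^q)t^4+(c^q+1)t^3+(c+1)t+(c^2+c)=0,\qquad t\in U_{q+1}\setminus\{1\}.
\]

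For (i), $c=0$. The case $z\neq1$ would force $y=0$, so it is impossible. With $z=1$ we need $y^2+y+1=0$, which has the two roots in $\gf_4\subseteq\gf_q$ because $m$ is even. Together with $x=0$ this gives $3$ preimages.

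For (ii), $c=1$. Here $y=s$ and $A=z^{-2}+z^2+1=s^2+1$, so consistency holds for every $z\in U_{q+1}\setminus\{1\}$, and $y=s\neq0$. This gives $q$ preimages. The case $z=1$ gives $y^2+y=0$, hence $y=1$ and $x=1$. The total is $q+1$.

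For (iii), $c\notin\gf_2$, and we must show that the roots of $P_c$ in $U_{q+1}\setminus\{1\}$ (excluding those giving $y=0$), together with the solutions from $z=1$, number at most $2$. The key simplification: for $t\in U_{q+1}$ we have $c^qt^4+t^3=t^4(t+c)^q$. Hence on $U_{q+1}$,
\[
P_c(t)=(c+1)^q\,t^4\,(t+c)^q+(c+1)(t+c).
\]
So, with $g=(c+1)(t+c)$, the roots satisfy either $t=c$ (possible only when $c\in U_{q+1}$) or $g^{q-1}=t^4$.

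I expect bounding the roots of this equation in $U_{q+1}$ to be the main obstacle. I would first try Vieta on the quartic. The sum of the roots is $c^{-q}$ and their product is $(c^2+c)/(c^2+c)^q\in U_{q+1}$. The conjugate-reciprocal symmetry $t\mapsto t^{-q}$ maps the root set to itself. If three or four roots lay in $U_{q+1}$, the fourth would be forced into $U_{q+1}$ as well, and Lemma~\ref{usum} combined with these symmetric-function identities should force $c\in\gf_2$, a contradiction. Two subcases need separate care: $c\in\gf_q$, where $z=1$ contributes the roots of $y^2+y+1=c$, and $c\in U_{q+1}$, where $t=c$ is a root. In both I would show that the remaining count drops accordingly, so that the total never exceeds $2$.
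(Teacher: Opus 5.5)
Your reduction via the polar decomposition is correct and takes a different route from the paper. The paper proves (i)–(ii) by clearing denominators and handles (iii) in the variable $x$. Your linear system in $(A,y)$, the quartic $P_c$, and your arguments for (i) and (ii) all check out. One slip: $t^4g^q=g$ gives $g^{q-1}=t^{-4}$, not $t^4$.

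The gap is (iii). You never actually bound the number of roots, and the mechanism you sketch cannot supply a bound. For every $c$ the coefficients of $P_c$ give $\sigma_1=c^{-q}$, $\sigma_2=0$, $\sigma_3/\sigma_4=c^{-1}=\overline{\sigma_1}$ and $\sigma_4\in U_{q+1}$. These are exactly the relations that four roots on $U_{q+1}$ would impose (they just restate the conjugate-reciprocal symmetry), so no contradiction can come from them. Lemma~\ref{usum} also does not apply: it needs four distinct elements of $U_{q+1}$ summing to $0$, whereas here $\sum t_i=c^{-q}\neq 0$. You also do not address a possible double root, where four roots on $U_{q+1}$ would give three distinct preimages. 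The two special subcases are only announced, not handled.

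\textbf{Special subcases.} Both are easy in your framework.
\begin{itemize}
\item For $c\in\mathbb{F}_q$ one has $P_c(t)=(c+1)(t+1)^2(ct^2+t+c)$. The quadratic factor has its roots on $U_{q+1}$ iff $\mathrm{Tr}_{\mathbb{F}_q/\mathbb{F}_2}(c)=1$. Since $m$ is even, that is exactly when $y^2+y+1=c$ has no root in $\mathbb{F}_q$, so the total is always $2$.
\item For $c\in U_{q+1}$ one has $c^2P_c(t)=(c+1)(t+c)(t^3+c^2)$. The root $t=c$ gives $y=0$ and is discarded. Since $3\nmid q+1$, exactly one root of $t^3=c^2$ lies on $U_{q+1}$, so there is exactly $1$ preimage.
\end{itemize}

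\textbf{Remaining $c$.} Here $P_c'(t)=(c^q+1)t^2+(c+1)$ has a common root with $P_c$ only if $c\in\mathbb{F}_q$, so $P_c$ is separable. Also $P_c(1)=(c^q+c)^2\neq 0$ and $c\notin U_{q+1}$. Hence three roots on $U_{q+1}$ would force four distinct preimages, and the missing step is to exclude four preimages. Your $t$-side data cannot do this. One way is to pass to $x$:
\begin{itemize}
\item Every preimage is a root of the $\mathbb{F}_2$-affine quartic \eqref{f33}.
\item The roots of \eqref{f33} in $\mathbb{F}_{q^2}$ are either empty or a coset of $\{0,v,v\omega,v\omega^2\}$, where $v^3=\frac{c+1}{\overline{c}+1}$.
\item On that coset $\overline{x}=a_{m/2}x+b_{m/2}$, so every preimage is a zero of the cubic \eqref{f34}.
\item Summing \eqref{f34} over the coset kills the terms of degree at most $2$ and leaves $a_{m/2}a_1\neq 0$. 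So \eqref{f34} cannot vanish at all four points.
\end{itemize}

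The paper reaches (iii) by eliminating from \eqref{f34} down to the quadratic \eqref{f36}. Its reason for \eqref{f36} not being identically zero (``at most four solutions'') does not rule out \eqref{f34} dividing \eqref{f33}, which is precisely the three-or-four-preimage scenario. So you cannot simply cite that step; you need an argument like the coset sum above.
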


\begin{proof} (i). We consider the equation 
	\begin{equation}\label{f1}
		x^2+x^{1-q}+x^{2-q}=0.
	\end{equation}
	It is clear that $x=0$ is a solution of (\ref{f1}). Now we suppose $x\neq 0$. Then multiplying both sides of (\ref{f1}) by $x^{q-1}$, we obtain
	\begin{equation}\label{f11}
		\overline{x}x+x+1=0.
	\end{equation}
	Herein and hereafter, $\overline{x}:=x^q$ for any $x\in\gf_{q^2}$. From (\ref{f11}), it follows that
	$$x=\overline{x}x+1\in\gf_q,$$ 
	which implies $\overline{x}=x$. Substituting back into (\ref{f11}), we get
	$$x^2+x+1=0.$$
	Therefore, the non-zero solutions of \eqref{f1} are precisely the roots of the above quadratic equation, 
	which are $x=\omega$ and $x=\omega^2$, where $\omega$ is a primitive element in $\gf_4$. Since $m$ is even, we have $\omega^q=\omega$. It is then straightforward to verify that both $x=\omega$ and $x=\omega^2$ satisfy the original equation \eqref{f1}. This completes the proof of part (i).
	
	(ii). Consider the equation
	\begin{equation}\label{f2}
		x^2+x^{1-q}+x^{2-q}=1.
	\end{equation} 
	Clearly, $x=0$ is not a solution of \eqref{f2}. Multiplying both sides by $x^q$ (since $x\neq 0$), we obtain
	\begin{equation*}
		x^{q+2}+x+x^2=x^q,
	\end{equation*}
	which simplifies to 
	\begin{equation}\label{f2'}
		(x+1)(x\overline{x}+\overline{x}+x)=0.
	\end{equation}
	It is straightforward to verify that $x=1$ is a solution of \eqref{f2'}. Now suppose  $x\overline{x}+\overline{x}+x=0$. Then we have 
	\[x\overline{x}+\overline{x}+x+1=1,\]
	which can be rewritten as
	\[(x+1)^{q+1}=1.\]
	Therefore, $x+1\in U_{q+1}$, which implies that $x=u+1$ for some $u\in U_{q+1}$. Conversely, for any $u\in U_{q+1}\setminus\{1\}$, let us substitute 
	$x=u+1$ into the original equation
	\[f(u+1)=(u+1)^2+(u+1)^{1-q}+(u+1)^{2-q}\]
	Using the fact that $u^{q+1}=1$, and simplifying each term accordingly, we find
	\[f(u+1)=u^2+1+\frac{u+1}{u^{-1}+1}+\frac{u^2+1}{u^{-1}+1}=1.\]
	Hence, every element of the form $x=u+1$ with $u\in U_{q+1}\setminus\{1\}$ satisfies \eqref{f2}. Since $|U_{q+1}|=q+1$, and only $u=1$ leads to $x=0$, which is not a solution, it follows that there are exactly $q+1$ distinct solutions in total. This completes the proof of part (ii).

	(iii). Consider the equation
	\begin{equation}\label{f3}
		x^2+x^{1-q}+x^{2-q}=c
	\end{equation} 
	for  $c\in\gf_{q^2}\setminus\gf_2$. Since $x=0$ is not a solution, we may multiply both sides by $x^q$
	to obtain
	\begin{equation}\label{f31}
		(x^2+c)\overline{x}=x^2+x.
	\end{equation}
	Note that the right-hand-side of (\ref{f31}) is not zero for any solution $x$
	of \eqref{f3},  so we can solve for $\overline{x}$ as
	\begin{equation}\label{f32}
		\overline{x}=\frac{x^2+x}{x^2+c}.
	\end{equation}
	Taking the $q$-th power of both sides of  (\ref{f32}), we get
	\[x=\frac{\overline{x}^2+\overline{x}}{\overline{x}^2+\overline{c}}.\]
	Substituting \eqref{f32} into this expression yields
	\[x=\frac{(\frac{x^2+x}{x^2+c})^2+\frac{x^2+x}{x^2+c}}{(\frac{x^2+x}{x^2+c})^2+\overline{c}}.\]
	After simplification, this leads to the quartic equation
	\begin{equation}\label{f33}
		x^4+\frac{c+1}{\overline{c}+1}x+\frac{c(c\overline{c}+1)}{\overline{c}+1}=0
	\end{equation}
	Since $c\neq 1$, this quartic equation has at most four roots in $\gf
	_{q^2}$,  implying that \eqref{f31} has at most four solutions.  We now proceed to show that, in fact, it has at most two solutions.
	
	From (\ref{f33}), we deduce that
	\[x^4=\frac{c+1}{\overline{c}+1}x+\frac{c(c\overline{c}+1)}{\overline{c}+1}.\]
	Raising both sides to the fourth power gives
	\[x^{4^2}=(\frac{c+1}{\overline{c}+1})^4x^4+(\frac{c(c\overline{c}+1)}{\overline{c}+1})^4.\]
	Substituting back $x^4$ from above, we find that $x^{16}$ can also be expressed as a linear function of $x$.
	In fact, one can show inductively that for any positive integer $k$, the power $x^{4^k}$ can be written in the form 
	\[x^{4^k}=a_kx+b_k,\]
	where $a_k,b_k\in \gf_{q^2}$ depend on $c$ and $k$. Specifically, for $k=1$, we have \[a_1=\frac{c+1}{\overline{c}+1}, b_1=\frac{c(c\overline{c}+1)}{\overline{c}+1}.\]
	In general, it holds that
	\[a_k=(\frac{c+1}{\overline{c}+1})^\frac{4^k-1}{3}.\]
	In particular, since $m$ is even, we have
	\[\overline{x}=x^{4^{\frac{m}{2}}}=a_{\frac{m}{2}}x+b_{\frac{m}{2}}.\]
	Substituting this into \eqref{f31}, we arrive at the cubic equation
	\begin{equation}\label{f34}
		a_{\frac{m}{2}}x^3+(b_{\frac{m}{2}}+1)x^2+(a_\frac{m}{2}c+1)x+b_{\frac{m}{2}}c=0.
	\end{equation}
	Note that $a_{\frac{m}{2}}\neq 0$, so this is indeed a cubic equation, which has at most three solutions.
	Now, multiplying \eqref{f34} by $x$, and using $x^4=a_1x+b_1$, we obtain another relation.
	\begin{equation}\label{f35}
		(b_{\frac{m}{2}}+1)x^3+(a_\frac{m}{2}c+1)x^2+(a_1a_{\frac{m}{2}}+b_{\frac{m}{2}}c)x+a_\frac{m}{2}b_1=0.
	\end{equation}
	If $b_{\frac{m}{2}}+1=0$, then (\ref{f35}) reduces to an equation with degree at most two, which has at most two roots. Otherwise, we eliminate $x^3$
	between \eqref{f34} and \eqref{f35}, resulting in a quadratic equation
	\begin{equation}\label{f36}
		(a_{\frac{m}{2}}^2c+a_\frac{m}{2}+b^2_{\frac{m}{2}}+1)x^2+(a_1a_{\frac{m}{2}}^2+a_\frac{m}{2}c+b_\frac{m}{2}+1)x+b_1a_{\frac{m}{2}}^2+(b_{\frac{m}{2}}+1)b_{\frac{m}{2}}c=0.
	\end{equation}
	We observe that the left-hand side cannot be identically zero, because \eqref{f31} has at most four solutions. Therefore, this quadratic equation has degree at most two, and hence at most two roots.
	
	Thus, we conclude that \eqref{f3} has at most two solutions in $\gf_{q^2}$. This completes the proof of part (iii).
\end{proof}

\subsection{On the number of solutions of a certain system of equations}
In this subsection, we study the number of solutions of a certain system of equations, which will play a key role in determining the differential spectrum of $F$. We define a sequence $\{\tau_m\}$ recursively as follows
$$
\tau_1 = \frac{1}{2}, \quad \tau_2 = -\frac{7}{4}, \quad \text{and} \quad \tau_m = \frac{1}{2}\tau_{m-1} - \tau_{m-2} \quad \text{for } m \geq 3.
$$
By solving the auxiliary quadratic equation corresponding to the given
recurrence relation, we obtain the explicit form if $\tau_m$. The general term $ \tau_m $ can be expressed explicitly in closed form as
\begin{equation}\label{taum}
	\tau_m=(\frac{1+\sqrt{-15}}{4})^m+(\frac{1-\sqrt{-15}}{4})^m=\frac{1}{2^{2m-1}}\sum_{i=0}^{[\frac{m}{2}]}\binom{m}{2i}(-15)^{i}.
\end{equation}
For convenience, we provide the following table showing the values of $ \tau_m $ for $ 1 \leq m \leq 10 $.

\begin{table}[!htp]
	\footnotesize
	\centering
	\caption{The values of $\tau_m$ for $1\leqslant m\leqslant 10$ }
	\begin{tabular}{|c|c|c|c|c|c|c|c|c|c|c|}
		\hline
		$m$ & $1$ & $2$ & $3$ & $4$ & $5$ & $6$ & $7$ & $8$ & $9$ & $10$  \\
		\hline
		
		$\tau_m$ & $\frac{1}{2}$ & $-\frac{7}{4}$ & $-\frac{11}{8}$ & $\frac{17}{16}$ & $\frac{61}{32}$ & $-\frac{7}{64}$ & $-\frac{251}{128}$ & $-\frac{223}{256}$ & $\frac{781}{512}$ & $\frac{1673}{1024}$  \\
		\hline

	\end{tabular}
	\label{tab:valuesoflambda}
\end{table}

For a positive integer $r$, let $n_r$ denote the number of solutions $(x_1,x_2,\cdots,x_r)\in(\gf_{q^2}^*)^r$ of the following system of equations:

\begin{equation}
	\Bigg\{ \begin{array}{ll}
		{x_1} + {x_2} + \cdots + {x_r} &= 0\\
		x_1^d + x_2^d + \cdots + x_r^d &= 0.
	\end{array}
\end{equation}
The following lemma was established in \cite{XLZH16}.

\begin{lemma}\label{xiaeqn}  We have,
	$$n_3=
	\begin{cases}
		(q-2)(q^2-1), &\mathrm{if}~ m ~\mathrm{is ~even}, \\
		q(q^2-1), &\mathrm{if}~ m ~\mathrm{is ~odd}
	\end{cases}
	$$
	and 
	$$n_4=
	\begin{cases}
		(5q^2-(\tau_m+6)q+4)(q^2-1), &\mathrm{if}~ m ~\mathrm{is ~even}, \\
		(5q^2-(\tau_m+6)q)(q^2-1), &\mathrm{if}~ m ~\mathrm{is ~odd},
	\end{cases}
	$$
	where $\tau_m$ is given in (\ref{taum}). 
\end{lemma}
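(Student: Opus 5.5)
Since the statement is quoted from \cite{XLZH16}, one could simply cite it. To reprove it, I would use the polar decomposition of Lemma~\ref{xyz} together with the Niho shape of $d=3q-2$. Write $x_i=y_iz_i$ with $y_i\in\gf_q^*$ and $z_i\in U_{q+1}$. Since $y_i^{q}=y_i$ and $z_i^{q}=z_i^{-1}$,
\[
x_i^d=y_i^{3q-2}z_i^{3q-2}=y_i\,z_i^{-5}.
\]
The system then becomes
\[
\sum_i y_iz_i=0,\qquad \sum_i y_iz_i^{-5}=0 .
\]
Applying $\bar{\ }$ to the first equation adds $\sum_i y_iz_i^{-1}=0$. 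For fixed $(z_1,\dots,z_r)\in U_{q+1}^r$, this is a homogeneous $\gf_q$-linear system in $(y_1,\dots,y_r)$. We need to count its solutions with all $y_i\neq0$, and then sum over the $z$'s. By homogeneity under $x\mapsto \lambda x$ for $\lambda\in\gf_{q^2}^*$, I would normalize $x_r=1$ and multiply the resulting count by $q^2-1$. This explains the factor $q^2-1$ in both formulas.

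\textbf{Case $r=3$.} After normalizing $z_3=y_3=1$, the equation $y_1z_1+y_2z_2=1$ with $y_1,y_2\in\gf_q$ has a unique solution when $z_1\neq z_2$. Solving it with its conjugate gives
\[
y_1=\frac{z_1(1+z_2)}{z_1+z_2},\qquad y_2=\frac{z_2(1+z_1)}{z_1+z_2}.
\]
Then $y_1,y_2\neq 0$ forces $z_1,z_2\neq 1$. Substituting into $y_1z_1^{-5}+y_2z_2^{-5}+1=0$ gives a polynomial relation between $z_1$ and $z_2$. After factoring out the trivial factors, I expect the remaining factor to be essentially one of the trinomial conditions analysed in Theorem~\ref{image}. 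The degenerate case $z_1=z_2$ has to be treated separately and is where $\gf_4\subseteq\gf_q$ (i.e.\ the parity of $m$) enters. Counting points with $z_i\in U_{q+1}$ should give $q-2$ for $m$ even and $q$ for $m$ odd.

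\textbf{Case $r=4$.} Here the $y$-solution space is generically one-dimensional over $\gf_q$, because three unknowns remain after normalization while there are two $\gf_q$-equations. Substituting into the second equation (two more $\gf_q$-conditions) gives a condition on $(z_1,z_2,z_3)\in U_{q+1}^3$. I would split into the following cases.
\begin{itemize}
\item Some $z_i$ coincide. These configurations produce the term $5q^2$ (pairings $x_1=x_2$, $x_3=x_4$, etc.) together with lower-order corrections.
\item The $z_i$ are pairwise distinct. Here Lemma~\ref{usum} rules out the degenerate vanishing $\sum u_i=0$, which ensures that the linear system has the expected rank.
\end{itemize}
In the generic case the condition should cut out an algebraic curve. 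I would parametrize $U_{q+1}$ rationally, via $z=\frac{t+\theta}{t+\bar\theta}$ with $t\in\gf_q\cup\{\infty\}$, to turn it into a curve over $\gf_q$. The appearance of $\tau_m$ strongly suggests an elliptic curve over $\gf_2$ with Frobenius eigenvalues $\frac{1\pm\sqrt{-15}}{2}$: these have norm $4$, which does not match an elliptic curve over $\gf_2$ (norm $2$), so more likely they come from a curve over $\gf_4$ or a genus-two Jacobian. In that case $q\tau_m=\alpha^m+\bar\alpha^m$ is the Frobenius trace over $\gf_q$. The number of its $\gf_q$-points is $q+1-q\tau_m$ up to the $\pm$ convention, which yields the term $-\tau_m q$.

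\textbf{Main obstacle.} The hardest step is this identification: reducing the generic $r=4$ condition to an explicit curve and recognizing its zeta function. I would first compute the curve equation symbolically, compute its point counts for $m=1,\dots,4$, and match them against Table~\ref{tab:valuesoflambda} to pin down the curve. I would then invoke the Weil/Hasse recursion $\tau_m=\tfrac12\tau_{m-1}-\tau_{m-2}$ to conclude for general $m$. Finally, the bookkeeping of excluded points (points at infinity, $z_i=1$, coincidences) accounts for the constant $+4$ for $m$ even, again tied to $\omega\in\gf_q$.
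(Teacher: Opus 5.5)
\textbf{Verdict: genuine gap.} The paper gives no argument for this lemma; it imports it from \cite{XLZH16}, as your first sentence suggests. As a reproof, your sketch leaves open the $r=4$ count, which is where the content lies. You never derive the relevant variety or its point count, and matching computed values for $m\le 4$ against Table~\ref{tab:valuesoflambda} cannot prove a formula for all $m$.

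The heuristics meant to guide that step are also inconsistent:
\begin{itemize}
\item \emph{Dimension.} After normalizing $x_4=1$, the first equation leaves a one-parameter $\gf_q$-family of $y$'s. The second equation then generically imposes one $\gf_q$-condition on $(z_1,z_2,z_3)$. So one expects a surface with about $q^2$ points, not a curve.
\item \emph{Size of the Frobenius term.} $q\tau_m=\alpha^m+\bar\alpha^m$, with $\alpha=\frac{1+\sqrt{-15}}{2}$ and $|\alpha|=2$, is of size $q$. It therefore cannot be the Frobenius trace over $\gf_q$ of a fixed curve, whose eigenvalues have absolute value $\sqrt q$. Your count $q+1-q\tau_m$ is even negative for $m=10$ ($1025-1673$). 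It is instead the trace over $\gf_{q^2}=\mathbb{F}_{4^m}$ for an elliptic curve over $\mathbb{F}_4$ with $4$ rational points.
\item \emph{Source of the $5q^2$ term.} It does not come from coincidences among the $z_i$. If $z_i=z_j$ and $y_i+y_j\neq 0$, then $((y_i+y_j)z_i,x_k,x_l)$ solves the $r=3$ system. For even $m$, those solutions are scalar multiples of $\gf_q$-solutions. Hence every solution with a coincidence is either a pairing or lies in $\lambda\gf_q^4$. After normalization these number only $(q^2-3q+3)+3(q^2-q)=4q^2-6q+3$.
\end{itemize}
Consequently the stratum with pairwise distinct $z_i$ must contribute $q^2-\tau_m q+1$. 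This has exactly the shape $\#E(\gf_{q^2})$ for such a curve $E$. That stratum carries a full $q^2$ main term, and it is where all the missing work lies.

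For $r=3$ the route is viable, but the details are wrong.
\begin{itemize}
\item Your $y_1,y_2$ do not satisfy $y_1z_1+y_2z_2=1$ (the left side equals $z_1+z_2+z_1z_2$), and they are not in $\gf_q$. The correct values, as in the proof of Theorem~\ref{deltab}, are $y_1=z_1(1+z_2)^2/(z_1+z_2)^2$ and $y_2=z_2(1+z_1)^2/(z_1+z_2)^2$.
\item Parity does not enter through the case $z_1=z_2$. That case forces $z_1=z_2=1$ and gives the $q-2$ points of $\gf_q\setminus\gf_2$ for every $m$. The two extra solutions for odd $m$ come from $z_1\neq z_2$ with $u_1+u_2=1$, i.e.\ $u_1\in\{\omega,\omega^2\}$, which lies in $U_{q+1}$ iff $m$ is odd.
\end{itemize}
The quickest route is to normalize $x_3=1$, which gives $n_3=(q^2-1)(\delta(1,1)-2)$. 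Then use the factorization $(x^q+x)^2(x^q+x^2)=0$ from Theorem~\ref{delta1}. For even $m$ this gives $\delta(1,1)=q$. For odd $m$, $x^q=x^2$ forces $x\in\mathbb{F}_4$, and $\omega,\omega^2$ are genuine extra solutions, so $\delta(1,1)=q+2$.
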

Based on Lemma \ref{xiaeqn}, we obtain the following result.
\begin{lemma} Let $N_4$ denote he number of solutions $(x_1,x_2,x_3,x_4)\in(\gf_{q^2})^4$ of the system of equations
	\begin{equation}\label{N4}
		\Bigg\{ \begin{array}{ll}
			{x_1} + {x_2} + x_3 + {x_4} &= 0\\
			x_1^d + x_2^d + x_3^d + x_4^d &= 0.
		\end{array}
	\end{equation}
	Then we have
	$$N_4=
	\begin{cases}
		1+(5q^2-(\tau_m+2)q+2)(q^2-1), \mathrm{if}~ m ~\mathrm{is ~even}, \\
		1+(5q^2-(\tau_m+2)q+6)(q^2-1), \mathrm{if}~ m ~\mathrm{is ~odd},
	\end{cases}
	$$
	where $\tau_m$ is given in (\ref{taum}). 
\end{lemma}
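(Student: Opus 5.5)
The plan is to partition the solutions $(x_1,x_2,x_3,x_4)\in(\gf_{q^2})^4$ of \eqref{N4} according to which coordinates are zero. In each class we reduce to a system with only nonzero variables and then invoke Lemma~\ref{xiaeqn}. Throughout we use that $d=3q-2>0$, so $0^d=0$ and a zero coordinate contributes nothing to either equation. Since the characteristic is $2$, signs play no role.

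Let $k$ be the number of zero coordinates. We treat the five values of $k$ in turn.
\begin{itemize}
\item[] $k=4$: only the all-zero tuple occurs, contributing $1$.
\item[] $k=3$: the remaining coordinate $x_i$ must satisfy $x_i=0$, a contradiction, so this class contributes $0$.
\item[] $k=2$: say the nonzero coordinates are $x_i,x_j$. The first equation forces $x_i=x_j$, and then the second equation holds automatically. This gives $q^2-1$ solutions for each of the $\binom{4}{2}=6$ choices of positions, contributing $6(q^2-1)$.
\item[] $k=1$: the three nonzero coordinates form exactly a solution counted by $n_3$. With $4$ choices for the zero position, this contributes $4n_3$.
\item[] $k=0$: the solutions are counted by $n_4$ by definition.
\end{itemize}
Hence
\[
N_4=1+6(q^2-1)+4n_3+n_4 .
\]

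It remains to substitute the values from Lemma~\ref{xiaeqn} and factor out $q^2-1$. For $m$ even, $N_4-1$ equals $(q^2-1)$ times
\[
6+4(q-2)+5q^2-(\tau_m+6)q+4=5q^2-(\tau_m+2)q+2 .
\]
For $m$ odd, $N_4-1$ equals $(q^2-1)$ times
\[
6+4q+5q^2-(\tau_m+6)q=5q^2-(\tau_m+2)q+6 .
\]
These are the claimed formulas.

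There is no real obstacle in this argument. The only points needing care are confirming that the classes with two and three zero coordinates are handled correctly (the two-zero case contributes the diagonal $x_i=x_j$), and checking the arithmetic in the final simplification.
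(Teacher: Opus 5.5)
Your proposal is correct and follows essentially the same route as the paper: you partition the solutions by the number of zero coordinates, obtain $N_4=1+6(q^2-1)+4n_3+n_4$, and then substitute the values from Lemma~\ref{xiaeqn}. Your simplification for both parities of $m$ is arithmetically sound.
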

\begin{proof}
	We now consider the solutions of (\ref{N4}) that include zero components. It is clear that $(0,0,0,0)$ is a solution of (\ref{N4}). Moreover, there are no solutions containing exactly three zeros. 
	
	Suppose  $(x_1,x_2,x_3,x_4)$ is a solutions of (\ref{N4}) with exactly two zero components. Without loss of generality, assume $x_3=x_4=0$. Then we must have $x_1+x_2=0$, which implies $x_1=x_2\in\gf_{q^2}^*$.
	By considering all possible positions of the two zero components, we find that there are $\binom{4}{2} = 6$ such configurations, each contributing $q^2 - 1$ solutions. Thus, the total number of solutions with exactly two zero components is $6(q^2 - 1)$.
	
	Similarly, the number of solutions with exactly one zero component is $4n_3$, and the number of solutions with no zero components is $n_4$, where $n_3$ and $n_4$ are given in Lemma \ref{xiaeqn}. Therefore, the total number of solutions to \eqref{N4} is
	\[N_4=1+6(q^2-1)+4n_3+n_4\]
	and substituting the values of $n_3$ and $n_4$ from Lemma~\ref{xiaeqn} yields the desired result.
\end{proof}

\section{The differential properties of $F$}\label{sec:main}
Let $F(x)=x^{3q-2}$ be a power function over $\gf_{q^2}$, where $q=2^m$ and $m$ is a positive even number. 
In this section, we determine the differential uniformity and the differential spectrum of $F$. 

Recall that for each $b\in\mathbb{F}_{q^2}$,  $ \delta(1,b) $ is defined as
$$\delta(1,b)=|\{x\in\gf_{q^2}:F(x+1)+F(x)=b\}|.$$ We begin with the following result.
\begin{theorem}\label{delta1} We have $\delta(1,1)=q$.
\end{theorem}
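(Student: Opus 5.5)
The count will come from two observations: every element of $\mathbb{F}_q$ is a solution of $F(x+1)+F(x)=1$, and no element of $\mathbb{F}_{q^2}\setminus\mathbb{F}_q$ is. Together these give $\delta(1,1)=q$.

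\textbf{Step 1: $\mathbb{F}_q$ gives $q$ solutions.} For $x\in\mathbb{F}_q^*$ we have $x^{q}=x$, so $F(x)=x^{3q}/x^{2}=x^3/x^2=x$. Also $F(0)=0$, so $F(x)=x$ on all of $\mathbb{F}_q$. Hence for $x\in\mathbb{F}_q$ (note $x+1\in\mathbb{F}_q$ too) we get $F(x+1)+F(x)=(x+1)+x=1$, and $\delta(1,1)\ge q$.

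\textbf{Step 2: reduce to a two-variable polynomial system.} Suppose $x\notin\mathbb{F}_q$, so in particular $x\neq 0,1$. Write $y=\overline{x}=x^q$, so $F(x)=y^3/x^2$ and $F(x+1)=(y+1)^3/(x+1)^2$. Clearing denominators, the equation becomes
\[
x^2(y+1)^3+(x+1)^2y^3=x^2(x+1)^2 ,
\]
which after expansion and cancellation of the $x^2y^3$ terms simplifies to
\[
x^2y^2+x^2y+y^3+x^4=0 .
\]
Raising to the $q$-th power swaps $x$ and $y$ and gives the conjugate relation $x^2y^2+xy^2+x^3+y^4=0$.

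\textbf{Step 3: eliminate to get $x+y\in\{0,1\}$.} Adding the two relations gives
\[
(x^2y+xy^2)+(x^3+y^3)+(x^4+y^4)=0 .
\]
Using $x^3+y^3=(x+y)(x^2+xy+y^2)$ and $x^4+y^4=(x+y)^4$, the left side factors as
\[
(x+y)\bigl[x^2+y^2+(x+y)^3\bigr]=(x+y)^3(1+x+y).
\]
Therefore $x+y=0$ or $x+y=1$. The first case gives $x\in\mathbb{F}_q$, which is excluded, so $y=x+1$.

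\textbf{Step 4: rule out $y=x+1$.} Substituting $y=x+1$ into $x^2y^2+x^2y+y^3+x^4=0$ collapses it to $x^2+x+1=0$. Hence $x\in\{\omega,\omega^2\}\subseteq\mathbb{F}_4$. Since $m$ is even, $\mathbb{F}_4\subseteq\mathbb{F}_q$, so $\overline{x}=x$ and $x+\overline{x}=0\neq1$, a contradiction. Thus no $x\notin\mathbb{F}_q$ is a solution, and $\delta(1,1)=q$.

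The only delicate point is the factorization in Step 3, together with the bookkeeping in Step 4 that $y=x+1$ really reduces to $x^2+x+1=0$. I will double-check both expansions carefully; the rest is routine.
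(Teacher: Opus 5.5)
Your proof is correct. It starts the same way as the paper's: after clearing denominators you reach the same relation $x^2\overline{x}^2+x^2\overline{x}+\overline{x}^3+x^4=0$, with $\overline{x}=x^q$.

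From there the routes differ. The paper observes that this polynomial already factors as $(x+\overline{x})^2(\overline{x}+x^2)$. The solution set therefore splits immediately into two cases:
\begin{itemize}
\item $\overline{x}=x$, i.e.\ $x\in\mathbb{F}_q$;
\item $\overline{x}=x^2$, which means $x^{2^{m-1}}=x$, so $x\in\mathbb{F}_{2^{m-1}}\cap\mathbb{F}_{2^{2m}}=\mathbb{F}_{2^{\gcd(m-1,2m)}}=\mathbb{F}_2$ because $m$ is even. The paper writes this subfield loosely as $\mathbb{F}_{q/2}$.
\end{itemize}

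You instead add the Frobenius conjugate, which gives only the necessary condition $(x+\overline{x})^3(1+x+\overline{x})=0$. You then have to back-substitute $\overline{x}=x+1$ to get $x^2+x+1=0$, and the evenness of $m$ enters through $\mathbb{F}_4\subseteq\mathbb{F}_q$.

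The paper's factorization is shorter and exact: it is an equivalence, so no back-substitution is needed. Your conjugate-sum step is the more systematic move, since it does not require spotting the factorization. In fact your sum is just $(x+\overline{x})^2$ times the sum of $\overline{x}+x^2$ and its conjugate.

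Both arguments make the role of the parity of $m$ transparent. For odd $m$, either one shows that $\omega$ and $\omega^2$ become two additional solutions.
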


\begin{proof} It sufficient to count the number of solutions of the equation
	\begin{equation}\label{eqndelta1}
		(x+1)^d+x^d=1.
	\end{equation}
	Clearly, $x=0$ and $x=1$ are solutions of (\ref{eqndelta1}). Now suppose $x\neq 0, 1$. Multiplying both sides of the equation by $ x^2(x+1)^2 $, we obtain
	\[(x+1)^{3q}x^2+x^{3q}(x+1)^2=x^2(x+1)^2.\]
	This can be rewritten as
	\[(x^q+x)^2(x^q+x^2)=0.\]
	We consider two cases. If $x^q+x=0$, then $x\in\gf_q$. It is easy to verify that every $x\in\gf_q$ satisfies (\ref{eqndelta1}). If $x^q+x^2=0$, then $x\in\gf_{\frac{q}{2}}$. However, since $ m $ is even, we have $\gf_{\frac{q}{2}}\cap\gf_{q^2}=\gf_2$. Therefore, only $ x = 0 $ and $ x = 1 $ satisfy this condition. 	Thus, all solutions of \eqref{eqndelta1} are in $ \mathbb{F}_q $, and hence $ \delta(1,1) = q $, as desired.
\end{proof}

Based on Lemma \ref{xyz} and Theorem \ref{image}, we are now ready to state the main result of this section.

\begin{theorem}\label{deltab}
	For $b\in\gf_{q^2}\setminus\{1\}$, we have $\delta(1,b)\leq 4$.
\end{theorem}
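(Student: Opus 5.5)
The plan is to reduce the equation $(x+1)^d+x^d=b$, with $d=3q-2$, to an equation of the form $f(y)=c$ for the trinomial $f(x)=x^2+x^{1-q}+x^{2-q}$ of Theorem~\ref{image}. Then part (iii) of that theorem, together with a two-to-one (or bounded-to-one) correspondence between $x$ and $y$, gives the bound $4$. Throughout I write $\overline{x}=x^q$, so that $x^{d}=\overline{x}^{3}/x^{2}$ for $x\neq 0$.

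\textbf{Step 1 (the natural involution).} The map $x\mapsto x+1$ preserves the left-hand side $D(x)=(x+1)^d+x^d$. Hence the solution set of $D(x)=b$ is a union of pairs $\{x,x+1\}$. It therefore suffices to show that, for $b\neq 1$, at most two such pairs occur. The solutions $x\in\{0,1\}$ are handled directly, since $D(0)=D(1)=1$ and so they only occur when $b=1$. For $x\notin\mathbb{F}_2$ I would substitute $y=x/(x+1)$, or a similar Möbius-type variable, chosen so that $D(x)=(x+1)^d\,(1+y^d)$.

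\textbf{Step 2 (separating the variables).} Next I use Lemma~\ref{xyz}: since $d\equiv -5 \pmod{q+1}$ and $d\equiv 1\pmod{q-1}$, writing $x+1=\lambda z$ with $\lambda\in\mathbb{F}_q^*$ and $z\in U_{q+1}$ gives $(x+1)^d=\lambda z^{-5}$. The goal is to rewrite $D(x)=b$ as $f(w)=c$, where $w$ is a rational function of $x$ (for instance $w=y$ or $w=1/y$) and $c$ depends only on $b$. The rewriting should also make $c\in\mathbb{F}_2$ correspond exactly to $b\in\{0,1\}$ or to degenerate $x$. Concretely, I would clear denominators in
\[
\frac{(\overline{x}+1)^3x^2+\overline{x}^3(x+1)^2}{x^2(x+1)^2}=b,
\]
and compare the result with the normalized form $(w^2+c)\overline{w}=w^2+w$ appearing in \eqref{f31}. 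The aim is to find the substitution under which the two coincide.

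\textbf{Step 3 (counting).} Once the equation reads $f(w)=c$, Theorem~\ref{image}(iii) gives at most $2$ values of $w$ whenever $c\notin\mathbb{F}_2$. Each $w$ corresponds to at most $2$ values of $x$, namely a pair $\{x,x+1\}$ or the preimages of a degree-two map. This gives $\delta(1,b)\le 4$. The cases $c\in\{0,1\}$ must then be traced back to specific $b$. The case $c=1$ should correspond to $b=1$, which is excluded here and is consistent with $\delta(1,1)=q$ in Theorem~\ref{delta1}. The case $c=0$ should give only $|f^{-1}(0)|=3$ values of $w$, one of which is degenerate ($w=0$). I would check directly that these yield at most $4$ values of $x$, or handle $b=0$ separately using $\gcd(d,q^2-1)$.

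\textbf{Main obstacle.} I expect Step 2 to be the main difficulty: identifying the exact substitution $x\mapsto w$ and the map $b\mapsto c$ that turns $D(x)=b$ into the trinomial equation. It may be necessary to normalize by a factor from $\mathbb{F}_q^*$ or $U_{q+1}$ via Lemma~\ref{xyz}, and to verify that the correspondence is at most two-to-one. If no clean substitution appears, the fallback is to mimic the proof of Theorem~\ref{image}(iii) directly on $D(x)=b$. That means solving for $\overline{x}$, conjugating to get a low-degree polynomial in $x$, and using $\overline{x}=x^{4^{m/2}}$ (which holds because $m$ is even) to reduce this to a polynomial of degree at most $4$ in $x$.
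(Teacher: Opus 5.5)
Your outline has the right architecture: rewrite $D(x)=b$ as $f(w)=c$, apply Theorem~\ref{image}(iii), and multiply by a fibre bound of $2$. But it omits the step that actually carries the proof, and the routes you offer for that step would fail. A M\"obius variable such as $w=y=x/(x+1)$ or $1/y$ is injective and is not invariant under $x\mapsto x+1$; that involution fixes $D$ but swaps $y$ and $1/y$. So such a $w$ cannot have the fibres $\{x,x+1\}$ that your Step~3 relies on. Indeed, if $D(x)=b$ were equivalent to $f(y)=c$ with $c\notin\gf_2$, Theorem~\ref{image}(iii) would give $\delta(1,b)\le 2$, which is false ($\omega_4=30$ for $m=4$). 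The fallback breaks at its first step. Clearing denominators gives $\overline{x}^3+x^2\overline{x}^2+x^2\overline{x}+x^2=bx^2(x+1)^2$, which is cubic in $\overline{x}$. So you cannot solve for $\overline{x}$ rationally in $x$ as in \eqref{f32}, and no relation of the form $x^4=a_1x+b_1$ arises to run the $x^{4^k}$ reduction.

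The missing idea is to decompose both $x+1$ and $x$ via Lemma~\ref{xyz}, eliminate the $\gf_q^*$-parts, and take the $U_{q+1}$-parts as the unknowns. Put $p=(x+1)^{q-1}$ and $r=x^{q-1}$; these lie in $U_{q+1}$ for $x\notin\gf_2$. If $x+1=y_1z_1$ and $x=y_2z_2$, then $p=z_1^{-2}$ and $r=z_2^{-2}$. From $\overline{x}=rx$ and $\overline{x}+1=p(x+1)$ you get $(p+r)x=p+1$. So $p=r$ forces $x\in\gf_q$, where $D(x)=1$. Otherwise $x=(p+1)/(p+r)$, and using $(p+r)^{1-q}=pr$,
\[
D(x)=p^3(x+1)+r^3x=p^2+pr+r^2+p^2r+pr^2=f(p+r).
\]
Thus the right substitution is $w=(x+1)^{q-1}+x^{q-1}=1+x+\cdots+x^{q-2}$ with $c=b$. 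This is a polynomial of degree $q-2$, not a M\"obius or degree-two map, so its fibre bound is not automatic. A degree count on $\overline{x}=wx^2+(w+1)x$ and its conjugate only yields $4$ preimages per value of $w$, hence $8$ in total. What yields $2$ is Lemma~\ref{usum}. Since four distinct elements of $U_{q+1}$ never sum to $0$, the value $w=p+r$ with $p\neq r$ determines the unordered pair $\{p,r\}$. Exchanging $p$ and $r$ sends $x$ to $x+1$. You also need that for $b=0$ only $w\in\{\omega,\omega^2\}$ survives ($\omega$ a primitive element of $\gf_4$), because $w=0$ corresponds to $x\in\gf_q$. With these ingredients your plan closes; without them it does not reach the bound $4$.
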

\begin{proof}
	We consider the equation 
	\begin{equation}\label{eqndeltab}
		(x+1)^d+x^d=b
	\end{equation}
	for $b\in\gf_{q^2}\setminus\{1\}$. Let $x\in\gf_{q^2}\setminus\gf_2$ be a solution of (\ref{eqndeltab}). By Lemma \ref{xyz}, we write
	$$x + 1 = y_1 z_1 \quad \text{and} \quad x = y_2 z_2,$$
	where $y_1,y_2\in\gf_q^*$ and $z_1,z_2\in U_{q+1}$. Adding these two expressions gives
	\[y_1z_1+y_2z_2=1.\]
	Taking the $q$-th power of the above equation, we obtain
	\[y_1z_1^{-1}+y_2z_2^{-1}=1.\]
	Thus, we arrive at the following system of linear equations
	$$
	\begin{bmatrix}
		z_1 & z_2 \\
		z_1^{-1} & z_2^{-1}
	\end{bmatrix}
	\begin{bmatrix}
		y_1 \\
		y_2
	\end{bmatrix}
	=
	\begin{bmatrix}
		1 \\
		1
	\end{bmatrix}
	.$$
	This completes the setup for analyzing the number of solutions to \eqref{eqndeltab}, which will lead to the bound $ \delta(1,b) \leq 4 $.
	
	The determinant of the coefficient matrix is zero if and only if $z_1=z_2$. In that case, we have $$(y_1+y_2)z_1=(y_1+y_2)z_1^{-1}=1,$$
	which implies $ z_1 = 1 $ and $ y_1 + y_2 = 1 $. Then
	$$
	b = (y_1 z_1)^d + (y_2 z_2)^d = y_1 + y_2 = 1,
	$$
	a contradiction to our assumption that $ b \ne 1 $. Therefore, in what follows, we may assume $ z_1 \ne z_2 $.
	Under this assumption, the system of linear equations has a unique solution given by
	$$
	\begin{cases}
		y_1=\frac{z_1(z_2+1)^2}{(z_1+z_2)^2} \\
		y_2=\frac{z_2(z_1+1)^2}{(z_1+z_2)^2}.
	\end{cases}
	$$
	If $x$ is a solution of (\ref{eqndeltab}), then  we can express $ b $ as
	\[b=y_1z_1^{-5}+y_2z_2^{-5}=(z_1^{-2}+z^{-1}_1z^{-1}_2+z_2^{-2}+z_1^{-2}z_2^{-1}+z_1^{-1}z_2^{-2})^2.\]
	Note that $z_1$ and $z_2$ are uniquely determined by $x$. We now aim to determine the maximum number of distinct pairs $ (z_1, z_2) \in U_{q+1}^2 $ that correspond to a fixed value of $ b $.
	
	Let $ u_1 = z_1^{-1} $ and $ u_2 = z_2^{-1} $, then $u_1\neq u_2$. Moreover, the expression for $ b $ becomes
	$$
	b = \left( u_1^2 + u_1 u_2 + u_2^2 + u_1^2 u_2 + u_1 u_2^2 \right)^2.
	$$
	Since $u_1,u_2\in U_{q+1}$, we have the identity $u_1u_2=(u_1+u_2)^{1-q}$. Let $t=u_1+u_2$. Substituting this into the expression for $ b $, we obtain
	\begin{equation}\label{bt}
		b=(t^2+t^{1-q}+t^{2-q})^2, 
	\end{equation}
	which is a function of a single variable $t$. Now, since $b\neq 1$, it follows from Theorem \ref{image} that there are at most two values of $t$ satisfying (\ref{bt}) for a given $b$. By Lemma \ref{usum}, if $t=u_1+u_2=u_3+u_4$ for some $u_3,u_4\in U_{q+1}$, then $(u_3,u_4)=(u_1,u_2)$ or $(u_3,u_4)=(u_2,u_1)$. Recall
	\[x=y_2z_2=\frac{z^2_2(z_1+1)^2}{(z_1+z_2)^2}=\big(\frac{u_2^{-1}(u_1^{-1}+1)}{u_1^{-1}+u_2^{-1}}\big)^2=\big(\frac{u_1+1}{u_1+u_2}\big)^2.\]
	Each pair $(u_1,u_2)$ gives rise to at most one value of $ x $. Since there are at most two possible values of $ t = u_1 + u_2 $, and each such $ t $ corresponds to at most two distinct ordered pairs $ (u_1, u_2) $, we conclude that \eqref{eqndeltab} has at most four solutions in total. Therefore, $ \delta(1,b) \leq 4 $, which completes the proof.
\end{proof}

Now we are in the position to determine the differential spectrum of $F$. The main result of this paper is as follows.

\begin{theorem}\label{DS} Let $m\geq 4$ be an even integer and $q=2^m$. Define the power function $ F(x) = x^{3q - 2} $ over $ \mathbb{F}_{q^2} $. Then the differential uniformity of $F$ is equal to $q$, and the differential spectrum of $F$ is given by

\begin{equation*}
DS_F=[\omega_0=\frac{1}{8}(5q^2+(-\tau_m+4)q-7),\omega_2=\frac{1}{4}(q^2+(\tau_m-2)q-1),
\end{equation*}	
\begin{equation*}
\omega_4=\frac{1}{8}(q^2-\tau_mq+1),\omega_q=1],
\end{equation*}	
	where $\tau_m$ is defined in (\ref{taum}). Furthermore, $F$ is locally differentially $4$-uniform.
\end{theorem}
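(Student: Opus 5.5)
The plan is to combine the structural information already established with the three moment identities (\ref{omegaiomega}) and (\ref{i^2omega}), applied over the field $\gf_{q^2}$ (so the ``$q$'' in those identities becomes $q^2$). First, the possible values of $\delta(1,b)$ are pinned down. Since we are in characteristic $2$, if $x$ is a solution of $(x+1)^d+x^d=b$ then so is $x+1\neq x$. Hence every $\delta(1,b)$ is even. By Theorem~\ref{delta1}, $\delta(1,1)=q$, and by Theorem~\ref{deltab}, $\delta(1,b)\le 4$ for $b\neq 1$. Therefore $\delta(1,b)\in\{0,2,4\}$ for all $b\neq 1$. Since $m\ge 4$ we have $q\ge 16>4$, so $\omega_q=1$ and the differential uniformity equals $q$ (the maximum over $a\neq 0$ reduces to $a=1$ for power functions). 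The ``locally differentially $4$-uniform'' claim is then exactly Theorem~\ref{deltab} restricted to $b\notin\gf_2$.

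It remains to find $\omega_0,\omega_2,\omega_4$, for which there are three linear equations. From (\ref{omegaiomega}) over $\gf_{q^2}$ we get
\[
\omega_0+\omega_2+\omega_4=q^2-1, \qquad 2\omega_2+4\omega_4=q^2-q,
\]
after removing the contribution of $\omega_q=1$. For the second moment we use Lemma~\ref{i2omega}. In characteristic $2$ the system (\ref{equsystem}) coincides with (\ref{N4}), so $N_4$ is given by the preceding lemma in the case $m$ even. The identity (\ref{i^2omega}) over $\gf_{q^2}$ then reads
\[
\sum_i i^2\omega_i=\frac{N_4-q^4}{q^2-1}.
\]
A direct simplification gives
\[
N_4-q^4=(q^2-1)\bigl(5q^2-(\tau_m+2)q+2\bigr)+1-q^4=(q^2-1)\bigl(4q^2-(\tau_m+2)q+1\bigr),
\]
so the second-moment sum is $4q^2-(\tau_m+2)q+1$. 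Subtracting the term $q^2\cdot\omega_q=q^2$ yields
\[
4\omega_2+16\omega_4=3q^2-(\tau_m+2)q+1.
\]

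Solving the linear system is then routine. Subtracting twice the first-moment equation gives $8\omega_4=q^2-\tau_m q+1$. Back-substituting gives $\omega_2=\frac14(q^2+(\tau_m-2)q-1)$. Finally, $\omega_0=q^2-1-\omega_2-\omega_4=\frac18(5q^2+(4-\tau_m)q-7)$, matching the statement. The only genuinely delicate points are bookkeeping ones. One must apply the identities with field size $q^2$ rather than $q$, correctly identify the sign-free system in characteristic $2$ with (\ref{N4}), and justify the evenness argument, which excludes odd DDT entries such as $1$ or $3$. The substantive difficulty (the bound $\delta(1,b)\le 4$ and the count $N_4$) has already been handled by Theorem~\ref{deltab} and Lemma~\ref{xiaeqn}. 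If one also wants $\omega_4>0$, so that the local uniformity is exactly $4$, this follows from $|\tau_m|<2$ via (\ref{taum}), since both roots have modulus $1$.
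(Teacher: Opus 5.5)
Your proposal is correct and follows essentially the same route as the paper. You use the evenness of $\delta(1,b)$ and Theorems~\ref{delta1} and \ref{deltab} to reduce the spectrum to $\omega_0,\omega_2,\omega_4,\omega_q$ with $\omega_q=1$, then solve the same linear system from the three moment identities (field size $q^2$, $N_4$ for $m$ even). Your justification of $\omega_4>0$ is more explicit than the paper's unproved remark $|\tau_m|<2^m$: both characteristic roots have modulus $1$, so $|\tau_m|\le 2$, and this non-strict bound already suffices.
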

\begin{proof} 
	Note that the solutions of $(x+1)^d+x^d=b$ come in pairs, hence, $\omega_i=0$ for all odd $i$. By Theorems \ref{delta1} and \ref{deltab}, we conclude that the possible nonzero entries in the differential spectrum of $F$ are $\omega_0,\omega_2,\omega_4$ and $\omega_q$. Since $m\geq 4$, we have $q>4$. It follows from Theorems \ref{delta1} and \ref{deltab} that $F$ is differentially $q$-uniform and that $\omega_q=1$. Moreover, by (\ref{omegaiomega}) and Lemma \ref{i2omega}, the multiplicities $ \omega_0, \omega_2, \omega_4 $, and $ \omega_q $ satisfy the following system of equations:
	\begin{equation*}
		\Bigg\{ \begin{array}{ll}
			\omega_0+\omega_2+\omega_4+\omega_q&=q^2\\
			2\omega_2+4\omega_4+q\omega_q&=q^2\\
			4\omega_2+16\omega_4+q^2\omega_q&=4q^2-(\tau_m+2)q+1\\
			\omega_q&=1,\\
		\end{array}
	\end{equation*}
	where $\tau_m$ is defined in (\ref{taum}). Solving this system yields the explicit values of $ \omega_0, \omega_2, \omega_4 $, and $ \omega_q $, which constitute the differential spectrum of $ F $. Furthermore, it can be shown that $|\tau_m|<2^m$. Therefore, when $ m \geq 4 $, we have $ \omega_4 > 0 $, which implies that there exist values $ b \in \mathbb{F}_{q^2} $ such that $ \delta(1,b) = 4 $. Hence $F$ is locally differentially $4$-uniform.
\end{proof}

\begin{remark}
	When $ m = 2 $, we have $ q = 4 $, and the function becomes $ F(x) = x^{10} $, defined over $ \mathbb{F}_{16} $. In this case, the differential spectrum of $ F $ is given by
	$$
	\mathrm{DS}_F = [\omega_0 = 12, \omega_4 = 4],
	$$
	which can also be derived by combining (\ref{omegaiomega}), Lemma \ref{i^2omega}, and Theorems~\ref{delta1} and \ref{deltab}.
\end{remark}

In the following, we provide some numerical examples to illustrate and verify our theoretical results.

\begin{example}
	Let $ m = 4 $. Then $ q = 2^4 = 16 $, and the function becomes $ F(x) = x^{46} $ over $ \mathbb{F}_{256} $. According to Theorem~\ref{DS}, the differential spectrum of $ F $ is
	$$
	\mathrm{DS}_F =	[\omega_0 = 165,\ \omega_2 = 60,\ \omega_4 = 30,\ \omega_{16} = 1],
	$$
	which coincides with the result obtained by direct computation using MAGMA.
\end{example}

\begin{example}
	Let $ m = 6 $. Then $ q = 2^6 = 64 $, and the function becomes $ F(x) = x^{190} $ over $ \mathbb{F}_{4096} $. By Theorem~\ref{DS}, the differential spectrum of $ F $ is
	$$
	\mathrm{DS}_F =	[\omega_0 = 2592,\ \omega_2 = 990,\ \omega_4 = 513,\ \omega_{64} = 1],
	$$
	which coincides with the result obtained by direct computation using MAGMA.
\end{example}

\begin{example}
	Let $ m = 8 $. Then $ q = 2^8 = 256 $, and the function becomes $ F(x) = x^{766} $ over $ \mathbb{F}_{65536} $. Applying Theorem~\ref{DS}, the differential spectrum of $ F $ is
	$$
	\mathrm{DS}_F =	[\omega_0 = 41115,\ \omega_2 = 16200,\ \omega_4 = 8220,\ \omega_{256} = 1],
	$$
	which coincides with the result obtained by direct computation using MAGMA.
\end{example}

\section{Concluding Remarks}\label{sec:con}

In this paper, we investigated the differential spectrum of the power function $ F(x) = x^{3q - 2} $ over $ \mathbb{F}_{q^2} $, where $ q = 2^m $ and $ m\geq 4 $ is an even integer. Here, $ 3q - 2 $ is a Niho exponent. However, the differential spectrum of $ x^{3q - 2} $ over $ \mathbb{F}_{q^2} $ for other values of $ q $, such as when $ q $ is an odd prime power, remains an open problem. 

The differential spectrum of a power function played a crucial role in analyzing the cross-correlation distribution between an $ m $-sequence and its decimation. These findings provide a foundation for further research in this area. Future work will focus on extending our results to broader parameter ranges and exploring the differential properties of other power functions with Niho exponents.

\bibliographystyle{IEEEtranS}

\bibliography{Y}

@article {A21,
    AUTHOR = {Abdukhalikov, K.},
     TITLE = {Equivalence classes of {N}iho bent functions},
   JOURNAL = {Des. Codes Cryptogr.},
  FJOURNAL = {Designs, Codes and Cryptography. An International Journal},
    VOLUME = {89},
      YEAR = {2021},
    NUMBER = {7},
     PAGES = {1509--1534},
}

@article{li2023differential,
  title={Differential spectra of a class of power permutations with Niho exponents},
  author={Li, Zhen and Yan, Haode},
  journal={Advances in Mathematics of Communications},
  volume={17},
  number={6},
  pages={1468--1475},
  year={2023},
  publisher={Advances in Mathematics of Communications}
}

@article{golouglu2023exponential,
  title={An exponential bound on the number of non-isotopic commutative semifields},
  author={G{\"o}lo{\u{g}}lu, Faruk and K{\"o}lsch, Lukas},
  journal={Transactions of the American Mathematical Society},
  volume={376},
  number={03},
  pages={1683--1716},
  year={2023}
}

@book{carlet2021boolean,
  title={Boolean functions for cryptography and coding theory},
  author={Carlet, Claude},
  year={2021},
  publisher={Cambridge University Press}
}

@article{li2023two,
  title={Two new infinite families of APN functions in trivariate form},
  author={Li, Kangquan and Kaleyski, Nikolay},
  journal={IEEE Transactions on Information Theory},
  volume={70},
  number={2},
  pages={1436--1452},
  year={2024},
  publisher={IEEE}
}

@article{golouglu2022biprojective,
  title={Biprojective almost perfect nonlinear functions},
  author={G{\"o}lo{\u{g}}lu, Faruk},
  journal={IEEE Transactions on Information Theory},
  volume={68},
  number={7},
  pages={4750--4760},
  year={2022},
  publisher={IEEE}
}

@article{BS91,
  title={Differential cryptanalysis of \uppercase{D}\uppercase{E}\uppercase{S}-like cryptosystems},
  author={E. Biham and A. Shamir},
  journal={J. Cryptol.},
  volume={4},
  number={1},
  pages={3-72},
  year={1991},
}

@article{BCC10,
  title={Differential properties of power functions},
  author={C. Blondeau and A. Canteaut and P. Charpin},
  journal={Int. J. Inf. Coding Theory},
  volume={1},
  number={2},
  pages={149-170},
  year={2010},
}

@article{BCC11,
  title={Differential properties of $x\rightarrow x^{2^t-1}$},
  author={C. Blondeau and A. Canteaut and P. Charpin},
  journal={IEEE Trans. Inform. Theory},
  volume={57},
  number={12},
  pages={8127-8137},
  year={2011},
}

@article{BP14,
  title={More differentially 6-uniform power functions},
  author={C. Blondeau and L. Perrin},
  journal={Des. Codes Cryptogr.},
  volume={73},
  number={2},
  pages={487-505},
  year={2014},
}

@article {BBM08,
    AUTHOR = {Bracken, Carl and Byrne, Eimear and Markin, Nadya and McGuire,
              Gary},
     TITLE = {New families of quadratic almost perfect nonlinear trinomials
              and multinomials},
   JOURNAL = {Finite Fields Appl.},
  FJOURNAL = {Finite Fields and their Applications},
    VOLUME = {14},
      YEAR = {2008},
    NUMBER = {3},
     PAGES = {703--714},
      ISSN = {1071-5797},
   MRCLASS = {11T06 (11T71 94A60)},
  MRNUMBER = {2435056},
MRREVIEWER = {S. D. Cohen},
}

@article {BCCCV20,
    AUTHOR = {Budaghyan, Lilya and Calderini, Marco and Carlet, Claude and
              Coulter, Robert S. and Villa, Irene},
     TITLE = {Constructing {APN} functions through isotopic shifts},
   JOURNAL = {IEEE Trans. Inform. Theory},
  FJOURNAL = {Institute of Electrical and Electronics Engineers.
              Transactions on Information Theory},
    VOLUME = {66},
      YEAR = {2020},
    NUMBER = {8},
     PAGES = {5299--5309},
      ISSN = {0018-9448},
   MRCLASS = {94A60 (11T71 94D10)},
  MRNUMBER = {4130674},
MRREVIEWER = {Kamil Otal},
 
}

@article {BHK20,
    AUTHOR = {Budaghyan, Lilya and Helleseth, Tor and Kaleyski, Nikolay},
     TITLE = {A new family of {APN} quadrinomials},
   JOURNAL = {IEEE Trans. Inform. Theory},
  FJOURNAL = {Institute of Electrical and Electronics Engineers.
              Transactions on Information Theory},
    VOLUME = {66},
      YEAR = {2020},
    NUMBER = {11},
     PAGES = {7081--7087},
      ISSN = {0018-9448},
   MRCLASS = {94D10},
  MRNUMBER = {4173628},
MRREVIEWER = {Aleksei Udovenko},

}

@article{CHNC13,
  title={Differential spectrum of some power functions in odd prime characteristic},
  author={S.-T. Choi and S. Hong and J.-S. No and H. Chung},
  journal={Finite Fields Appl.},
  volume={21},
  number={},
  pages={11-29},
  year={2013},
}

@article {D991,
    AUTHOR = {Dobbertin, Hans},
     TITLE = {Almost perfect nonlinear power functions on {${\rm GF}(2^n)$}:
              the {W}elch case},
   JOURNAL = {IEEE Trans. Inform. Theory},
  FJOURNAL = {Institute of Electrical and Electronics Engineers.
              Transactions on Information Theory},
    VOLUME = {45},
      YEAR = {1999},
    NUMBER = {4},
     PAGES = {1271--1275},
      ISSN = {0018-9448},
   MRCLASS = {94A55 (94A60)},
  MRNUMBER = {1686267},

}

@article {D992,
    AUTHOR = {Dobbertin, Hans},
     TITLE = {Almost perfect nonlinear power functions on {${\rm GF}(2^n)$}:
              the {N}iho case},
   JOURNAL = {Inform. and Comput.},
  FJOURNAL = {Information and Computation},
    VOLUME = {151},
      YEAR = {1999},
    NUMBER = {1-2},
     PAGES = {57--72},
      ISSN = {0890-5401},
   MRCLASS = {94A60 (68P25)},
  MRNUMBER = {1692816},

}

@incollection {D01,
    AUTHOR = {Dobbertin, Hans},
     TITLE = {Almost perfect nonlinear power functions on {${\rm GF}(2^n)$}:
              a new case for {$n$} divisible by {$5$}},
 BOOKTITLE = {Finite fields and applications ({A}ugsburg, 1999)},
     PAGES = {113--121},
 PUBLISHER = {Springer, Berlin},
      YEAR = {2001},
   MRCLASS = {11T71 (94A60)},
  MRNUMBER = {1849084},
}

@article{DHKM01,
  title={Ternary m-sequences with three-valued cross-correlation function: new decimations of Welch and Niho type},
  author={H. Dobbertin and T. Helleseth and P. V. Kumar and H. Martinsen},
  journal={IEEE Trans. Inform. Theory},
  volume={47},
  number={4},
  pages={1473-1481},
  year={2001},
}

@article{DMMPW03,
  title={{APN} functions in odd characteristic},
  author={H. Dobbertin and D. Mills and E. N. M\"{u}ller and A. Pott and W. Willems},
  journal={Discrete Math.},
  volume={267},
  number={1-3},
  pages={95-112},
  year={2003},
}

@article{HLXZT23,
  title={The differential spectrum and boomerang spectrum of a class of locally-{APN} functions},
  author={Z. Hu and N. Li and L. Xu and X. Zeng and X. Tang},
  journal={Des. Codes Cryptogr.},
  volume={91},
  number={5},
  pages={1695-1711},
  year={2023},
}

@article{HRS99,
  title={New families of almost perfect nonlinear power functions},
  author={T. Helleseth and C. Rong and D. Sandberg},
  journal={IEEE Trans. Inform. Theory},
  volume={45},
  number={2},
  pages={475-485},
  year={1999},
}

@article{JLLQ21,
  title={Differential spectrum of a class of power functions},
  author={S. Jiang and K. Li and Y. Li and L. Qu},
  journal={J. Cryptology},
  volume={9},
  number={3},
  pages={484-495},
  year={2021},
}

@article{JLLQ22,
  title={Differential and boomerang spectrums of some power permutations},
  author={S. Jiang and K. Li and Y. Li and L. Qu},
  journal={Cryptogr. Commun.},
  volume={14},
  number={},
  pages={371-393},
  year={2022},
}

@article {LZ19,
    AUTHOR = {Li, Nian and Zeng, Xiangyong},
     TITLE = {A survey on the applications of {N}iho exponents},
   JOURNAL = {Cryptogr. Commun.},
  FJOURNAL = {Cryptography and Communications. Discrete Structures, Boolean
              Functions and Sequences},
    VOLUME = {11},
      YEAR = {2019},
    NUMBER = {3},
     PAGES = {509--548},
      ISSN = {1936-2447},
   MRCLASS = {12Y05 (11T06 12E20 94A55 94A60 94B05)},
  MRNUMBER = {3946534},

}

@article{MXLH22,
  title={On the differential properties of the power function $x^{p^m+2}$},
  author={Y. Man and Y. Xia and C. Li and T. Helleseth},
  journal={Finite Fields Appl.},
  volume={84},
  number={10},
  pages={1-22},
  year={2022},
}

@book{N72,
  title={Multivalued cross-correlation functions between two maximal linear recursive sequences},
  author={Y. Niho},
  publisher={Ph.D. dissertation, Dept. Elect. Eng., Univ. Southern California},
  country={Log Angeles, CA, USA},
  year={1972},
}

@article{N94,
  title={Differentially uniform mappings for cryptography},
  author={K. Nyberg},
  journal={Advances in Cryptology-EUROCRYPT 1994},
  volume={765},
  number={},
  pages={55-64},
  year={1994},
}

@article{PLZ23,
  title={On the differential spectrum of a differentially 3-uniform power function},
  author={T. Pang and N. Li and X. Zeng},
  journal={Finite Fields Appl.},
  volume={87},
  number={},
  pages={},
  year={2023},
}

@article{TY23,
  title={Differential spectrum of a class of {APN} power functions},
  author={X. Tan and H. Yan},
  journal={Des. Codes Cryptogr.},
  volume={91},
  number={},
  pages={2755-2768},
  year={2023},
}

@article {TD21,
    AUTHOR = {Tang, Chunming and Ding, Cunsheng},
     TITLE = {An infinite family of linear codes supporting 4-designs},
   JOURNAL = {IEEE Trans. Inform. Theory},
  FJOURNAL = {Institute of Electrical and Electronics Engineers.
              Transactions on Information Theory},
    VOLUME = {67},
      YEAR = {2021},
    NUMBER = {1},
     PAGES = {244--254},
      ISSN = {0018-9448},
   MRCLASS = {94B15 (94B05)},
  MRNUMBER = {4231952},

}

@article {TC17,
    AUTHOR = {S. Tian and Y. Chen},
     TITLE = {Differential spectra for a class of power functions over
              {$\Bbb F_{p^n}$}},
   JOURNAL = {J. Systems Sci. Math. Sci.},
  FJOURNAL = {Journal of Systems Science and Mathematical Sciences. Xitong
              Kexue yu Shuxue},
    VOLUME = {37},
      YEAR = {2017},
    NUMBER = {5},
     PAGES = {1351--1367},
      ISSN = {1000-0577},
   MRCLASS = {94A60 (11T71)},
  MRNUMBER = {3711734},
}

@article{TLWZTJ23,
  title={On the differential spectrum and the APcN property of a class of power functions over finite fields},
  author={Z. Tu and N. Li and Y. Wu and X. Zeng and X. Tang and Y. Jiang},
  journal={IEEE Trans. Inform. Theory},
  volume={69},
  number={1},
  pages={582-597},
  year={2023},
}

@article {XLZH16,
    AUTHOR = {Xia, Yongbo and Li, Nian and Zeng, Xiangyong and Helleseth,
              Tor},
     TITLE = {An open problem on the distribution of a {N}iho-type
              cross-correlation function},
   JOURNAL = {IEEE Trans. Inform. Theory},
  FJOURNAL = {Institute of Electrical and Electronics Engineers.
              Transactions on Information Theory},
    VOLUME = {62},
      YEAR = {2016},
    NUMBER = {12},
     PAGES = {7546--7554},
      ISSN = {0018-9448},
   MRCLASS = {94A55},
  MRNUMBER = {3578242},
 
}

@article{XYY18,
  title={On a conjecture of differentially 8-uniform power functions},
  author={M. Xiong and H. Yan and P. Yuan},
  journal={Des. Codes Cryptogr.},
  volume={86},
  number={8},
  pages={1601-1621},
  year={2018},
}

@article{XZLH20,
  title={The differential spectrum of a ternary power function},
  author={Y. Xia and X. Zhang and C. Li and T. Helleseth},
  journal={Finite Fields Appl.},
  volume={64},
  number={},
  pages={1-16},
  year={2020},
}

@article{XML23,
  title={On the Niho type locally-{APN} power functions and their boomerang spectrum},
  author={X. Xie and S. Mesnager and N. Li and D. He and X. Zeng},
  journal={IEEE Trans. Inform. Theory},
  volume={69},
  number={6},
  pages={4056-4064},
  year={2023},
}

@article{XY17,
  title={A note on the differential spectrum of a differentially 4-uniform power function},
  author={M. Xiong and H. Yan},
  journal={Finite Fields Appl.},
  volume={48},
  number={},
  pages={117-125},
  year={2017},
}

@article{YL21,
  title={Differential spectra of a class of power permutations with characteristic 5},
  author={H. Yan and C. Li},
  journal={Des. Codes Cryptogr.},
  volume={89},
  number={6},
  pages={1181-1191},
  year={2021},
}

@article{YMT23,
  title={The complete differential spectrum of a class of power permutations over odd characteristic finite fields},
  author={H. Yan and S. Mesnager and X. Tan},
  journal={IEEE Trans. Inform. Theory},
  volume={69},
  number={11},
  pages={7426-7438},
  year={2023},
}

@article{YMT24,
  title={On a class of {APN} power functions over odd characteristic finite fields: Their differential spectrum and c-differential properties},
  author={H. Yan and S. Mesnager and X. Tan},
  journal={Discrete Math.},
  volume={347},
  number={4},
  pages={},
  year={2024},
}

@article{YX22,
  title={The differential spectrum of the power mapping $x^{p^n-3}$},
  author={H. Yan and Y. Xia and C. Li and T. Helleseth and M. Xiong and J. Luo},
  journal={IEEE Trans. Inform. Theory},
  volume={68},
  number={8},
  pages={5535-5547},
  year={2022},
}

@article {YDZQ25,
    AUTHOR = {Yuan, Wenping and Du, Xiaoni and Zhou, Huan and Qiao, Xingbin},
     TITLE = {The differential uniformity of the power functions
              {$x^{\frac{p^n + 5}{2}}$} over {$\Bbb{F}_{p^n}$}},
   JOURNAL = {Finite Fields Appl.},
  FJOURNAL = {Finite Fields and their Applications},
    VOLUME = {105},
      YEAR = {2025},
     PAGES = {Paper No. 102622, 15},
      ISSN = {1071-5797},
   MRCLASS = {11T06 (11T71)},
  MRNUMBER = {4884211},

}

@article {ZKW09,
    AUTHOR = {Zha, Zhengbang and Kyureghyan, Gohar M. and Wang, Xueli},
     TITLE = {Perfect nonlinear binomials and their semifields},
   JOURNAL = {Finite Fields Appl.},
  FJOURNAL = {Finite Fields and their Applications},
    VOLUME = {15},
      YEAR = {2009},
    NUMBER = {2},
     PAGES = {125--133},
      ISSN = {1071-5797},
   MRCLASS = {12E20 (11T06 12K10 51E15 94A60)},
  MRNUMBER = {2494329},
MRREVIEWER = {Dieter Jungnickel},

}

@article {ZW09,
    AUTHOR = {Zha, Zhengbang and Wang, Xueli},
     TITLE = {New families of perfect nonlinear polynomial functions},
   JOURNAL = {J. Algebra},
  FJOURNAL = {Journal of Algebra},
    VOLUME = {322},
      YEAR = {2009},
    NUMBER = {11},
     PAGES = {3912--3918},
      ISSN = {0021-8693},
   MRCLASS = {12E20},
  MRNUMBER = {2556130},
MRREVIEWER = {David Brink},

}

@article{ZW11,
  title={Almost perfect nonlinear power functions in odd characteristic},
  author={Z. Zha and X. Wang},
  journal={IEEE Trans. Inform. Theory},
  volume={57},
  number={7},
  pages={4826-4832},
  year={2011},
}

\end{document}